\theoremstyle{definition}
\newtheorem{theorem}{Theorem}
\begin{document}

\title{Joint Sum Rate And Error Probability Optimization: Finite Blocklength Analysis}
%
%
%

\author{Mahdi Haghifam, Mohammad Robat Mili, Behrooz Makki, Masoumeh Nasiri-Kenari, Tommy Svensson.  \thanks{Mahdi Haghifam, Mohammad Robat Mili and Masoumeh Nasiri-Kenari are with Electrical Engineering Department, Sharif University of Technology,
Tehran, Iran. Emails: $\text{haghifam\_mahdi}@\text{ee.sharif.edu}$, $\text{mohammad.robatmili}@\text{ieee.org}$, $\text{mnasiri}@\text{sharif.edu}$. Behrooz Makki and Tommy Svensson are with Chalmers University of Technology, Gothenburg, Sweden. Emails:$\{\text{behrooz.makki,
tommy.svensson}
\}@\text{chalmers.se}$.
} 
}

\date{\vspace{-5ex}}
\maketitle

\begin{abstract}
We study the tradeoff between the
sum rate and the error probability in downlink of wireless networks. Using the recent results on the achievable rates of finite-length codewords, the problem is cast as a joint optimization of the network sum rate and the per-user error probability. Moreover, we develop an efficient algorithm based on the divide-and-conquer technique to simultaneously maximize the network sum rate and minimize the maximum users' error probability and to evaluate the effect of the codewords length on the system performance. The results show
that, in delay-constrained scenarios, optimizing the per-user error probability plays a key role in achieving high throughput.
\end{abstract}


%
\IEEEpeerreviewmaketitle

\section{Introduction}
The fifth generation of wireless communication (5G) must support novel traffic types for which low latency, high data rate, and ultra reliability are of interest. Particularly, in many applications
such as vehicle-to-vehicle and vehicle-to-infrastructure
communications for traffic efficiency/safety or real-time video
processing for augmented reality, the codewords are required to
be short (in the order of $\sim$ 100 channel uses) with stringent requirements on the latency and reliability \cite{tutor2}. Therefore, it is interesting to optimize the performance of wireless networks
in the presence of finite-length codewords.\par
 In 2010, \cite{poly_main} presented accurate information-theoretic approximations for the achievable rates of finite blocklength codes. Using \cite{poly_main}, the performance of wireless networks with short packets has been studied in various papers, for the cases with cognitive radio \cite{cog_mak}, relay networks \cite{li2016throughput}, hybrid automatic repeat request technique \cite{harq}.\par
In this letter, we consider a wireless network with an access point (AP) serving multiple users. Using short packets, the AP transmits packets in the downlink to the users, which have different target error probability requirements. Particularly, using the recent results of \cite{poly_main}, we propose a joint sum rate and per-user error probability optimization problem and investigate the effect of the codeword length on the system performance. To solve the joint sum rate and per-user error probability optimization problem, we develop a low-complexity two-level algorithm based on the divide-and-conquer approach. Also, we derive a closed-form expression for the optimal per-user error probability (Theorem \ref{porp:er}). Finally, we find an efficient and close-to-optimal power allocation algorithm, in terms of sum rate and error probability, based on the augmented Lagrange method \cite{bert} (Algorithm 1).    
\par
The simulation and analytical results show that 1) our proposed algorithm can reach (almost) the same performance as in the exhaustive search-based approach with considerably less implementation complexity (Figs. 1 and 2). Also, 2) the throughput is sensitive to the length of short packets while its sensitivity to the packet length decreases for long packets (Fig. 2b). Finally, 3) optimal error probability assignment with water-filling (WF) power allocation achieves higher throughput, compared to using the optimal power allocation with equal error probability assignment (Figs. 1, 2a).
\section{System Model}
We consider a downlink communication model with $N$ single-antenna users which are served by a single-antenna AP. It is assumed that each user is allocated an orthogonal channel to the AP, for instance they could be separated in the frequency or time domain. Let us denote the instantaneous channel gain between the AP and the $i$-th user, $i=1, \ldots, N,$ by $g_i$. The channel gain $g_i$ can be expressed as $g_i=\bar{g_i} \theta_i $
, where $\theta_i$ represents the small scale fading and $\bar{g_i}$
is the average channel gain, obtained by considering the path loss
effects and shadowing. Thus, with $i$-th user located at distance $d_i$ from the AP, we have $\bar{g_i}=\kappa_i d_i^{-\delta_i}$ where $\kappa_i$ is the signal power gain at distance 1 meter from the AP and $\delta_i$  is the path loss exponent. Moreover, the power allocated by the AP to the signal of user $i$ is denoted by $p_{i}$. Thus, the instantaneous signal-to-noise ratio (SNR) received by user $i$ is $\gamma_{i}=\frac{p_{i}g_{i}}{\sigma^2}$, where $\sigma^2$ is the noise power density. 
We characterize the network performance when the AP employs packets of short length. Specifically, the user $i$'s message is encoded into the packets of length $L$ and transmitted with power $p_i$. In this way, the maximum achievable information rate in nats per channel use (npcu) for user $i$ which can be decoded with  
block error probability no greater than $\epsilon_i$ is given by\cite[Thm. 1]{tan2015third}
\begin{equation}\label{eq:rate_fbl}
r_{i}=\log\left(1+\gamma_i p_i\right)-\sqrt{\frac{1}{L}\left(1-\frac{1}{\left(1+\gamma_i p_i\right)^2}\right)}Q^{-1}\left(\epsilon_i\right) +\frac{\log\left(L\right)}{L} .
\end{equation}
In (\ref{eq:rate_fbl}), the achievable rate increases unboundedly as the error probability tends towards one. On the other hand, the rate decreases significantly in the cases with strict error probability requirements, i.e., small $\epsilon_i$'s. Also, the achievable rate increases with the signal length $L$ monotonically and letting $L\to\infty$, the achievable rate (\ref{eq:rate_fbl}) converges to Shannon's capacity formula in the cases with asymptotically long codewords.\par  
Motivated by the tradeoff between the achievable rates and the error probability in (\ref{eq:rate_fbl}), we consider a joint sum rate maximization and per-user error probability minimization problem. Assuming perfect channel state information (CSI) at the AP, we study a multi-objective optimization problem
\begin{subequations}
\label{opt:main0}
\begin{align}
    \underset{\bm{\epsilon},\bm{p}}{\text{maximize}}
        & \quad \sum_{i=1}^{N}\log\left(1+\gamma_ip_i\right)- \sqrt{\frac{1}{L}\left(1-\frac{1}{\left(1+\gamma_ip_i\right)^2}\right)}  Q^{-1}\left(\epsilon_i\right) \label{cost_sr} \\
      \underset{\bm{\epsilon},\bm{p}}{\text{minimize}}
        & \quad   \max\{\epsilon_1,\hdots,\epsilon_N\} \label{cost_er}  \\
    \text{subject to} 
        & \quad 0\leq\epsilon_i \leq \varepsilon_{\text{max},i} \quad \forall i\in \{1,\hdots,N\} ,\label{const_er}\\
        & \quad 0\leq p_i \quad \forall i\in \{1,\hdots,N\}, \label{const_pow1}\\
         & \quad \sum_{i=1}^{N} p_i \leq P_{\text{max}}, \label{const_pow2}
\end{align}
\end{subequations}
where $\bm{\epsilon}=[\epsilon_1,\hdots,\epsilon_N]$ and $\bm{p}=[p_1,\hdots,p_N]$. Also, in (\ref{cost_er}), the goal is to minimize the maximum error probability of the users. Then, in (\ref{const_er}), $\varepsilon_{\text{max},i}$ is the maximum error probability constraint of user $i$ which indicates the supporting quality of service (QoS) requirement of user $i$. Also, the total power constraint of the AP is denoted by $P_{\text{max}}$. In this way, (\ref{opt:main0}) is of interest in emerging applications of 5G calling for heterogeneous QoS requirements on data rate and reliability. For instance, massive machine-type communication and ultra-reliable and low latency communication scenarios demand short-length packet exchange with stringent requirement on reliability at moderately low rate\cite{tutor2}. The requirements of the aforementioned services illustrate how the
framework of (\ref{opt:main0}) can be utilized to balance conflicting performance objectives, namely, sum rate and maximum error probability. Moreover, as seen in the following, our discussions are well applicable to the cases when optimizing the network sum throughput, which is defined as the product of the rate and the successful message decoding probability. However, as opposed to throughput optimization, (\ref{opt:main0}) is flexible in optimizing the rates and error probabilities individually based on the QoS requirements. \par
Depending on the number of users, there may be no closed-form solution for (\ref{opt:main0}). Thus, we follow the same method as in \cite{yu2016tradeoff} to convert the problem into a single objective optimization using the weighted sum method while normalizing the objectives. Also, as seen in Section \ref{sec:sim}, our proposed sub-optimal approach can reach (almost) the same performance as in the optimal exhaustive search-based scheme. With no loss of generality, we assume $\varepsilon_{\text{max},1}\leq \varepsilon_{\text{max},2} \leq \hdots \leq \varepsilon_{\text{max},N} < \frac{1}{2}$. Also, to guarantee a consistent comparison between the objectives in (\ref{cost_sr}) and (\ref{cost_er}), we normalize them as
\begin{align}
&U_1\left(\bm{p},\bm{\epsilon}\right)=\frac{\left[\sum_{i=1}^{N}\log\left(1+\gamma_ip_i\right)- \sqrt{\frac{1}{L}\left(1-\frac{1}{\left(1+\gamma_ip_i\right)^2}\right)}  Q^{-1}\left(\epsilon_i\right)\right]}{\text{SR}_{\infty}}\\
 &U_2\left(\bm{\epsilon}\right)=\frac{\varepsilon_{\text{max},N}-\max\{\epsilon_1,\hdots,\epsilon_N\}}{\varepsilon_{\text{max},N}},
\end{align} 
 where $\text{SR}_{\infty}$ is a normalization factor that can be found by plugging the water-filling power allocation expression into the Shannon's capacity formula which provides an upper bound for (\ref{cost_sr}). Then, we use the weighted sum method to rewrite (\ref{opt:main0}) as the single-objective optimization problem
\begin{subequations}
\label{opt:main_after}
\begin{align}
    \underset{\bm{\epsilon},\bm{p}}{\text{maximize}}
        & \quad \omega U_1\left(\bm{p},\bm{\epsilon}\right)+\left(1-\omega\right) U_2\left(\bm{\epsilon}\right) \\
    \text{subject to} 
        &\quad  (\text{\ref{const_er}})-(\text{\ref{const_pow2}}).
\end{align}
\end{subequations}
Here, $0 \leq \omega \leq 1$ is the weighting parameter. Note that, with $\omega$ ranging from $0$ to $1$, scenarios with strict rate requirements and relaxed error probability requirements to scenarios with low rate requirements and ultra-reliability are addressed.
\section{Proposed Algorithm}\label{sec:alg}
\vspace{-1mm}
The optimization problem (\ref{opt:main_after}) belongs to the class of non-convex problems which has a multi-modal objective function, so finding its global optimal solution is computationally infeasible. For this reason, we apply the primal decomposition approach \cite{palomar2006tutorial} to optimize $\bm{\epsilon}$ and $\bm{p}$ separately. In this way, to solve (\ref{opt:main_after}), we use the following iterative approach
\begin{equation*}
\begin{aligned}
\underbrace{\bm{p}[0]}_{\text{initialization}}&\rightarrow \bm{\epsilon}[1]\rightarrow\bm{p}[1]\rightarrow \hdots \underbrace{\bm{\epsilon}[T]\rightarrow\bm{p}[T]}_{\text{optimal solution}},
\end{aligned}
\end{equation*}
where $\bm{\epsilon}[t]$ and $\bm{p}[t]$ are the optimal error probability and power allocation vectors at iteration $t$, and $T$ is the maximum number of iterations considered by the network designer. The details of our proposed optimization approach are as follows.

\subsection{Error Probability Optimization For A Given Power Allocation}
Here, for a given power allocation $\bm{p}^{\star}\left[t-1\right]:= \bm{p}$, we find the optimal error probabilities of each user at iteration $t$ denoted by $\bm{\epsilon}^{\star}[t]$. Setting $z=\max\{\epsilon_1,\hdots,\epsilon_{N}\}$ and assuming a given power allocation, (\ref{opt:main_after}) is rephrased as
\begin{subequations}
\label{opt:sub_er}
\begin{align}
    \underset{\bm{\epsilon},z}{\text{minimize}}
        & \quad \frac{\omega}{\text{SR}_{\infty}} \sum_{i=1}^{N} \sqrt{\frac{1}{L}\left(1-\frac{1}{\left(1+\gamma_ip_i\right)^2}\right)}  Q^{-1}\left(\epsilon_i\right)+\frac{1-\omega}{\varepsilon_{\text{max},N}}z \label{subprob1:obj}  \\
    \text{subject to} 
        & \quad  \epsilon_i\leq z \quad \quad \forall i\in \{1,\hdots,N\} \label{const:sub11} \\
        & \quad 0\leq z \leq \epsilon_{\text{max},N} \label{const:sub12}\\
        & \quad 0\leq\epsilon_i \leq \varepsilon_{\text{max},i} \quad \forall i\in \{1,\hdots,N\} \label{const:sub13}
\end{align}
\end{subequations}
Theorem \ref{porp:er} gives a closed-form expression for the optimal error probability assignment of each user in terms of (\ref{opt:sub_er}).
\begin{theorem}\label{porp:er}
The optimal error probabilities of the users are given by
\begin{equation} \label{eq:opt_err_nahaie}
\bm{\epsilon}^{\star}=\begin{cases}\vspace{1mm}
\Big[\varepsilon_{\text{max},1},\hdots,\varepsilon_{\text{max},k-1},\underbrace{\beta_{k},\hdots,\beta_{k}}_{N-k+1 \ \text{times}} \Big] & \beta_{k} \in \mathcal{I}_{k}\\ 
\left[\varepsilon_{\text{max},1},\hdots,\varepsilon_{\text{max},N}\right] & \text{otherwise}
\end{cases},
\end{equation}
for $k=1,\hdots,N$. Here, we define the intervals $\mathcal{I}_{k}=(\varepsilon_{\text{max},k-1},\varepsilon_{\text{max},k}]$ with $\mathcal{I}_{1}=(0,\varepsilon_{\text{max},1}]$, and \\ $\beta_{k}= Q\Bigg(\sqrt{\smash[b]{2\log\Big(\frac{\sqrt{L}\left(1-\omega\right) \text{SR}_{\infty}}{ \varepsilon_{\text{max},N}  \omega\sqrt{2\pi}\sum_{i=k}^{N}\frac{\sqrt{\gamma_i^2 p_i^2+2\gamma_i p_i}}{1+\gamma_i p_i} }\Big)}}\Bigg)$. 
\end{theorem}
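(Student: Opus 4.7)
I would first verify that (\ref{opt:sub_er}) is a convex program: a short calculation gives $\frac{d^2}{d\epsilon^2}Q^{-1}(\epsilon)=2\pi\,Q^{-1}(\epsilon)\,e^{Q^{-1}(\epsilon)^2}>0$ on $(0,1/2)$, so each summand of the first term of (\ref{subprob1:obj}) is strictly convex in $\epsilon_i$, the $z$-term is linear, and all constraints are affine. Hence the KKT conditions characterize the unique optimizer, and my plan is to reshape them into the stated closed form by first eliminating $\bm{\epsilon}$ in favor of the scalar variable $z$.

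Since $Q^{-1}$ is strictly decreasing, for any fixed $z$ the objective is minimized by pushing each $\epsilon_i$ up to its upper bound, giving $\epsilon_i^{\star}=\min\{z,\varepsilon_{\text{max},i}\}$. Using the assumed ordering $\varepsilon_{\text{max},1}\le\cdots\le\varepsilon_{\text{max},N}$, for $z\in\mathcal{I}_k$ this splits the users into a saturated block $i<k$ (with $\epsilon_i^{\star}=\varepsilon_{\text{max},i}$) and an active block $i\ge k$ (with $\epsilon_i^{\star}=z$). Substituting back reduces (\ref{opt:sub_er}) to a scalar problem whose objective, on $\mathcal{I}_k$, equals $C_k+\frac{\omega}{\text{SR}_{\infty}}\big(\sum_{i=k}^N\alpha_i\big)Q^{-1}(z)+\frac{1-\omega}{\varepsilon_{\text{max},N}}z$, with $\alpha_i=\sqrt{L^{-1}(1-(1+\gamma_ip_i)^{-2})}$ and $C_k$ collecting the saturated-user contributions. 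The pieces agree at each breakpoint $z=\varepsilon_{\text{max},k}$, and because $(Q^{-1})'(z)<0$ while $\sum_{i=k}^N\alpha_i>\sum_{i=k+1}^N\alpha_i$, the one-sided derivative jumps \emph{upward} at each breakpoint, so the glued function $f(z)$ is globally convex on $(0,\varepsilon_{\text{max},N}]$.

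On each piece I would then write the interior first-order condition $\frac{\omega}{\text{SR}_{\infty}}(Q^{-1})'(z)\sum_{i=k}^N\alpha_i+\frac{1-\omega}{\varepsilon_{\text{max},N}}=0$, plug in the identity $(Q^{-1})'(z)=-\sqrt{2\pi}\,e^{Q^{-1}(z)^2/2}$, and solve for $z$; substituting the explicit form of $\alpha_i$ reproduces exactly the $\beta_k$ in the statement. To locate the global minimum, observe that $\sum_{i=k}^N\alpha_i$ strictly decreases in $k$, so the $\beta_k$ strictly decrease, while the intervals $\mathcal{I}_k$ shift rightward; combined with global convexity, there is at most one $k$ with $\beta_k\in\mathcal{I}_k$, and that $k$ realizes the optimum, which is the first branch of (\ref{eq:opt_err_nahaie}). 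The ``otherwise'' branch corresponds to $\beta_N\ge\varepsilon_{\text{max},N}$: then $f$ is non-increasing on the whole feasible set, forcing $z^{\star}=\varepsilon_{\text{max},N}$ and hence $\epsilon_i^{\star}=\min\{\varepsilon_{\text{max},N},\varepsilon_{\text{max},i}\}=\varepsilon_{\text{max},i}$ for all $i$.

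The main obstacle I expect is the gluing step: without the derivative-jump computation that promotes ``piecewise convex on each $\mathcal{I}_k$'' to ``globally convex on $(0,\varepsilon_{\text{max},N}]$'', one would be pushed into a combinatorial KKT case analysis over which of the $\epsilon_i\le z$ and $\epsilon_i\le\varepsilon_{\text{max},i}$ constraints are active, which would obscure the clean scalar structure that drives the closed form in (\ref{eq:opt_err_nahaie}).
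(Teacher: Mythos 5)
Your proof is correct, and it takes a genuinely different route from the paper's. The paper stays inside the KKT machinery for (\ref{opt:sub_er}): it writes the full Lagrangian with multipliers $\bm{\lambda},\bm{\nu},\eta$, uses complementary slackness together with the stationarity condition (\ref{KKT:derv1}) to argue that each $\epsilon_i^{\star}\in\{z^{\star},\varepsilon_{\text{max},i}\}$, and then, conditioning on $z^{\star}\in\mathcal{I}_k$, sums the resulting $\lambda_i^{\star}$ over $i\ge k$ against the $z$-stationarity condition to extract $\beta_k$. You instead carry out the partial minimization over $\bm{\epsilon}$ analytically ($\epsilon_i^{\star}=\min\{z,\varepsilon_{\text{max},i}\}$ because $Q^{-1}$ is decreasing), reduce (\ref{opt:sub_er}) to a one-dimensional piecewise-smooth convex problem, and recover $\beta_k$ from the scalar first-order condition; the algebra is identical but the bookkeeping is lighter. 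Your route also buys two things the paper leaves implicit: the upward jump of the one-sided derivative at each $\varepsilon_{\text{max},k}$ certifies global convexity of the reduced objective, and the monotonicity of $k\mapsto\beta_k$ played against the rightward-shifting intervals shows that \emph{at most one} branch of (\ref{eq:opt_err_nahaie}) can be active, so the case statement is unambiguous --- the paper only asserts that the optimum is found ``by searching in $N+1$ branches.'' One shared loose end: both arguments silently exclude the knife-edge case in which the minimizer of the reduced objective sits exactly at an interior breakpoint $\varepsilon_{\text{max},k}$ with $0$ strictly inside the subdifferential there (equivalently, $\beta_{k+1}\le\varepsilon_{\text{max},k}<\beta_k$); then no $\beta_j$ lies in $\mathcal{I}_j$, yet the optimum is not the ``otherwise'' vector. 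Your identification of the otherwise branch with $\beta_N\ge\varepsilon_{\text{max},N}$ inherits exactly the same gap as the paper's treatment of $z^{\star}=\varepsilon_{\text{max},N}$, so this is a defect of the theorem statement rather than of your proof.
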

\begin{proof}
Since the constraints in (\ref{const:sub11})-(\ref{const:sub13}) are affine functions in $\bm{\epsilon}$ and $z$, it is sufficient to prove that the objective function in (\ref{subprob1:obj}) is convex. The second derivative of $Q^{-1}\left(x\right)$ is given by $\frac{\text{d}^2 Q^{-1}\left(x\right)}{\text{d}x^2}=2\pi Q^{-1}\left(x\right)\exp\left(\left( Q^{-1}\left(x\right)\right)^2\right) > 0$ if $x < \frac{1}{2}$. Therefore, considering the fact that $\epsilon_i \leq \epsilon_{\text{max},i} \leq \frac{1}{2}$, the objective function in (\ref{subprob1:obj}) is a sum of convex functions and an affine function, i.e., $z$. Hence, (\ref{opt:sub_er}) is a convex optimization problem, and the optimal solution can be found by considering Karush-Kuhn-Tucker (KKT) conditions. Thus, we write the Lagrangian function of (\ref{opt:sub_er}) as
\begin{equation}\nonumber
\mathcal{L}\left(\bm{\epsilon},z,\bm{\lambda},\bm{\nu},\eta\right)= \frac{\omega}{\text{SR}_{\infty}} \sum_{i=1}^{N} \sqrt{\frac{1}{L}\left(1-\frac{1}{\left(1+\gamma_ip_i\right)^2}\right)}  Q^{-1}\left(\epsilon_i\right)+
\end{equation}
\begin{equation}
\frac{1-\omega}{\varepsilon_{\text{max},N}}z-\sum_{i=1}^{N}\lambda_i\left(z-\epsilon_i\right)-\sum_{i=1}^{N}\nu_i\left(\epsilon_{\text{max},i}-\epsilon_i\right)-\eta\left(\varepsilon_{\text{max},N}-z\right),
\end{equation}
where $\bm{\lambda}=\left[\lambda_1,\hdots,\lambda_N\right] \succeq 0$, $\eta\geq 0$, and $\bm{\nu}=\left[\nu_1,\hdots,\nu_N\right] \succeq 0$ are dual variables associated with constraints in (\ref{const:sub11}), (\ref{const:sub12}), and (\ref{const:sub13}), respectively. According to the KKT conditions, the optimal solution, which is denoted by $\bm{\epsilon}^{\star}$ and $z^{\star}$, should satisfy
\begin{subequations}
\label{KKT:COND}
\begin{align}
&\frac{\partial \mathcal{L}}{\partial \epsilon^{\star}_i}= -\frac{\omega}{\text{SR}_{\infty}}\sqrt{\frac{1}{L}\left(1-\frac{1}{\left(1+\gamma_ip_i\right)^2}\right)} \sqrt{2\pi} \exp\left(\frac{\left(Q^{-1}\left(\epsilon^{\star}_i\right)\right)^2}{2}\right) \nonumber \\
&+\lambda^{\star}_i+\nu^{\star}_i=0, \label{KKT:derv1}\\
 &\frac{\partial \mathcal{L}}{\partial z^{\star}} = \frac{1-\omega}{\varepsilon_{\text{max},N}}-\sum_{i=1}^{N}\lambda^{\star}_i+\eta^{\star}=0, \label{KKT:derv2} \\
&\lambda^{\star}_i\left(z^{\star}-\epsilon^{\star}_i\right)=0,\label{KKT:eps1}\\
&\nu^{\star}_i\left(\varepsilon_{\text{max},i}-\epsilon^{\star}_i\right)=0,\label{KKT:eps2}\\
&\eta^{\star}\left(\varepsilon_{\text{max},N}-z^{\star}\right)=0\label{KKT:eps5},\\
&\text{(\ref{const:sub11})-(\ref{const:sub13})}\label{KKT:eps4}.
\end{align}
\end{subequations}
In (\ref{KKT:derv1}), we have used $\frac{\text{d} Q^{-1}\left(x\right)  }{\text{d} x}=-\sqrt{2\pi}\exp\left(\frac{\left(Q^{-1}\left(x\right)\right)^2}{2}\right)$. From (\ref{KKT:eps1}) and (\ref{KKT:eps2}), it can be verified that $\epsilon_i^{\star}$ is equal to either $z^{\star}$ or $\varepsilon_{\text{max},i}$; otherwise, $\lambda^{\star}_i$ and $\nu^{\star}_i$ must be equal to zero which contradict with (\ref{KKT:derv1}). Assume  $z^{\star} < \varepsilon_{\text{max},N}$ and $z^{\star}\in \mathcal{I}_k$, so according to (\ref{KKT:eps5}), we have $\eta^{\star}=0$. Note that, for $ 1 \leq i \leq k-1$, $\epsilon_{i}^{\star}$ must be equal to $\varepsilon_{\text{max},i}$ since we have $z^{\star} > \varepsilon_{\text{max},i}$. Also, due to the fact that $z^{\star} < \varepsilon_{\text{max},i}$, it can be inferred that $\epsilon_{i}^{\star}=z^{\star}$ for $ k \leq i \leq N$. Thus, in the cases with $z^{\star}\in \mathcal{I}_k$, we have  $\bm{\epsilon}^{\star}=\left[\varepsilon_{\text{max},1},\hdots,\varepsilon_{\text{max},k-1},\underbrace{z^{\star},\hdots,z^{\star}}_{\text{$N-k+1$ times}}\right]$. Then, from (\ref{KKT:eps1}) and (\ref{KKT:eps2}), it can be concluded that $\lambda^{\star}_{i}=0$ for $1 \leq i \leq k-1$ and $\nu^{\star}_i=0$ for $k \leq i \leq N$. In this way, (\ref{KKT:derv1}) can be expressed as
\begin{equation}\label{eq:lambda_megh}
\frac{\omega}{\text{SR}_{\infty}}\sqrt{\frac{1}{L}\left(1-\frac{1}{\left(1+\gamma_ip_i\right)^2}\right)} \sqrt{2\pi} \exp\left(\frac{\left(Q^{-1}\left(z^{\star}\right)\right)^2}{2}\right)=\lambda^{\star}_i,
\end{equation}
for $k \leq i \leq N$. Also, from (\ref{KKT:derv2}) and (\ref{eq:lambda_megh}), we have
\begin{equation}\label{eq:find_b}
\begin{aligned}
&\frac{1-\omega}{\varepsilon_{\text{max},N}}=\sum_{i=k}^{N}\lambda^{\star}_i.
\end{aligned}
\end{equation}
Plugging (\ref{eq:lambda_megh}) into (\ref{eq:find_b}), the upper branch of (\ref{eq:opt_err_nahaie}) is found. In this way, depending on $z^{\star}$ being in each region $\mathcal{I}_{k}$, the closed-form solution for $\bm{\epsilon}^{\star}$ is provided. Then, given $z^{\star}=\varepsilon_{\text{max},N}$, it is straightforward
to show that the objective function in (\ref{subprob1:obj}) is a decreasing function in each $\epsilon_i$, so the lower branch of (\ref{eq:opt_err_nahaie}) provides the optimal solution. Note that because of the strict convexity of (\ref{subprob1:obj}), there is an optimal solution for $\bm{\epsilon}$ which is found by searching in $N+1$ branches of (\ref{eq:opt_err_nahaie}).
\end{proof}
\vspace{-4mm}
\subsection{Optimal Power Allocation For A Given Error Probability} 
Consider a given $\bm{\epsilon}[t] := \bm{\epsilon}$. Then, (\ref{opt:main_after}) is relaxed to 
\begin{subequations}
\label{opt:sub_pow}
\label{opt:subprolme2}
\begin{align}
    \underset{\bm{p}}{\text{maximize}}
        & \quad \omega U_1\left(\bm{p},\bm{\epsilon}\right) \label{subprob2:obj}  \\
    \text{subject to} 
        & \quad \text{(\ref{const_pow1})-(\ref{const_pow2})}.
\end{align}
\end{subequations}
Since the function in (\ref{subprob2:obj}) is non-concave in $\bm{p}$, problem (\ref{opt:sub_pow}) belongs to the class of non-convex optimization problems. In a non-convex problem, there is a nonzero duality gap between primal and dual problems. Here, we use the \textit{augmented Lagrange approach} \cite[sec 4]{bert} to deal with this non-convex optimization which reduces the duality gap by augmenting a penalty-like quadratic term added to the Lagrangian function. In
\cite[Sec 4.2]{bert}, it has been proved that the augmented Lagrangian is locally convex when the penalty parameter is sufficiently large.  In contrast to the penalty functions approach, the
augmented Lagrangian function largely preserves smoothness
and does not require an asymptotically large penalty parameter for the method to converge, meaning that the penalization is exact. Augmented Lagrangian algorithms are based on successive maximization of the augmented Lagrangian function in which the multiplier estimates and penalty parameter are fixed in each iteration and then updated
between iterations. Applying the augmented Lagrangian method on (\ref{opt:sub_pow}), which eliminates the constraints and adds them to the objective function, gives the augmented Lagrangian function
\begin{equation}
\begin{aligned}
&\mathcal{L}_{\mu,\zeta}(\bm{p})=
\frac{\omega}{\text{SR}_{\infty}}\left[\sum_{i=1}^{N}\log\left(1+\gamma_ip_i\right)- \sqrt{\frac{1}{L}\left(1-\frac{1}{\left(1+\gamma_ip_i\right)^2}\right)}  Q^{-1}\left(\epsilon_i\right)\right] \\
&-\frac{1}{2\mu} \left[\left(\max\left\lbrace 0,\zeta-\mu\left(P_{\text{max}}-\sum_{i=1}^{N}p_i\right)\right\rbrace \right)^2-\zeta^2\right],
\end{aligned}
\end{equation}
where $\mu$ is  a positive coefficient denoting the penalty parameter and
$\zeta$ is the Lagrangian dual variable associated with (\ref{const_pow2}). Then, at stage $l$ of the power allocation problem, we solve
\begin{equation}
    \underset{\bm{p}}{\text{maximize}}
       \quad \mathcal{L}_{\mu^{(l)},\zeta^{(l)}}\left(\bm{p}\right), \label{ag:}  
\end{equation}
 which approximates (\ref{opt:sub_pow}) to find the power allocation at iteration $ l+1 $ denoted by $\bm{p}^{(l+1)}$.  Moreover, variables $\zeta^{(l)}$ and $\mu^{(l)}$ are updated according to
\begin{equation}
\begin{aligned}
\zeta^{(l+1)}&=\max\left\{  0,\zeta^{(l)}-\mu^{(l)}\left(  P_{\max}-\sum_{i=1}^{N} p_{i}^{(l)}
\right)  \right\}, \\ 
\mu^{(l+1)}&=2\mu^{(l)}, \label{up:zeta_eta}%
\end{aligned}
\end{equation}
respectively. In this way, as $\mu^{(l)}$ increases, the
violations introduced by constraints are penalized more severely so that the maximizer of the penalty function in (\ref{ag:}) gives the results closer to the feasible region. In \cite[Sec 4.2]{bert}, it has been shown that while the constraints are nonlinear, the convergence rate of the augmented Lagrangian method is linear. 
\par
The iterative joint error probabilities and power allocation
algorithm is summarized in \textbf{Algorithm \ref{CHalgorithm}}. In order to analyze the complexity order of the proposed algorithm, we note that the optimal error probabilities can be found by (\ref{eq:opt_err_nahaie}) with the complexity $\mathcal{O}\left(N\right)$. Also, the complexity of the power allocation at each  iteration is $\mathcal{O}\left(N^2\right)$. Thus, the complexity of Algorithm \ref{CHalgorithm} is $\mathcal{O}\left(N^2\right)$ + $\mathcal{O}\left(N\right)= \mathcal{O}\left(N^2\right).$
\begin{algorithm}
\caption{Error Probabilities Assignment and Power Allocation}
\label{CHalgorithm}
\begin{algorithmic}[1]
\State For every given $\omega$, $P_{\text{max}}$, $\{\varepsilon_{\text{max},1},\hdots,\varepsilon_{\text{max},N}\}$, $\mu^{(0)}$, and $\zeta^{(0)}$.
\State Initialize: $\bm{p}[0]$ and $t=0$.
\While{$\{\epsilon^{\star}[t] \}$ converges} 
\State Calculate $\bm{\epsilon}^{\star}[t+1]$ via (\ref{eq:opt_err_nahaie}) with given $\bm{p}^{\star}[t]$.
\State Initialize: $l=0$.
\While{$\{p^{(l)} \}$ converges} 
\State Calculate $\bm{p}^{(l+1)}$ via (\ref{ag:}) with given $\bm{\epsilon}^{\star}[t+1]$, $\zeta^{(l)}$, and $\mu^{(l)}$.
\State Update $\zeta^{(l+1)}$ and $\mu^{(l+1)}$ via (\ref{up:zeta_eta}).
\State $l=l+1$.
\EndWhile
\State $\bm{p}^{\star}[t+1]=\bm{p}^{(l)}$.
\State $t=t+1$
\EndWhile
\end{algorithmic}
\end{algorithm}
\section{Numerical Results and Conclusion}\label{sec:sim}
Here, we study the trade-off between the sum rate and maximum error probability. We set the noise power $\sigma^2 = 1$, the number of users $N=4$, and the error probability constraints 
$\bm{\varepsilon_{\text{max}}}=[10^{-5},5\times 10^{-5},10^{-4},5\times 10^{-4}]$ which are often assumed for vehicular-to-vehicular communications\cite{tutor2}. Also, it is assumed that the users are equidistant from the AP. Also, we consider
Rayleigh fading with mean 1. Finally, we set $\mu^{(0)}=1$ and $\zeta^{(0)}=0.15$ in Algorithm 1. 
For the numerical results, we consider the cases with $ L \geq 100$ channel uses, for which the
approximation (\ref{eq:rate_fbl}) is tight enough \cite{poly_main}. Also, we compare our method with three baseline algorithms: 1) WF-based power allocation with the
error probabilities of all users set to the minimum of the required error probabilities, called minmax error
probability assignment, i.e., $\bm{\epsilon}=[\varepsilon_{\text{max},1},\hdots,\varepsilon_{\text{max},1}]$, 2) the proposed method for power allocation with the minmax error probability assignment, and 3) equal power allocation with the proposed method for the error probabilities assignment. Finally, the results are obtained by averaging over $10^4$ different channel realizations. 
 \par  
Figure 1 shows the the tradeoff between the sum rate and the error probability for different algorithms with $P_{\text{max}}=6$ dB and $L=200$ channel uses. As a performance metric, we define the sum throughput as
\begin{equation}
\mathcal{T}=\sum_{i=1}^{N}r_i\left(1-\epsilon_i\right),
\end{equation}
where the user $i$'s codeword rate and error probability are given by $r_i$ and $\epsilon_i$, respectively.
Then, Fig. 2a demonstrates the sum throughput versus the AP's
total power constraint $P_{\text{max}}$ by setting $L=100$ channel uses and $\omega=0.9$. Finally, Fig. 2b evaluates the effect of the codeword length on the sum throughput. The results lead to the following conclusions:\\
\begin{figure}
\centering
\includegraphics[width=5in, height=3in]{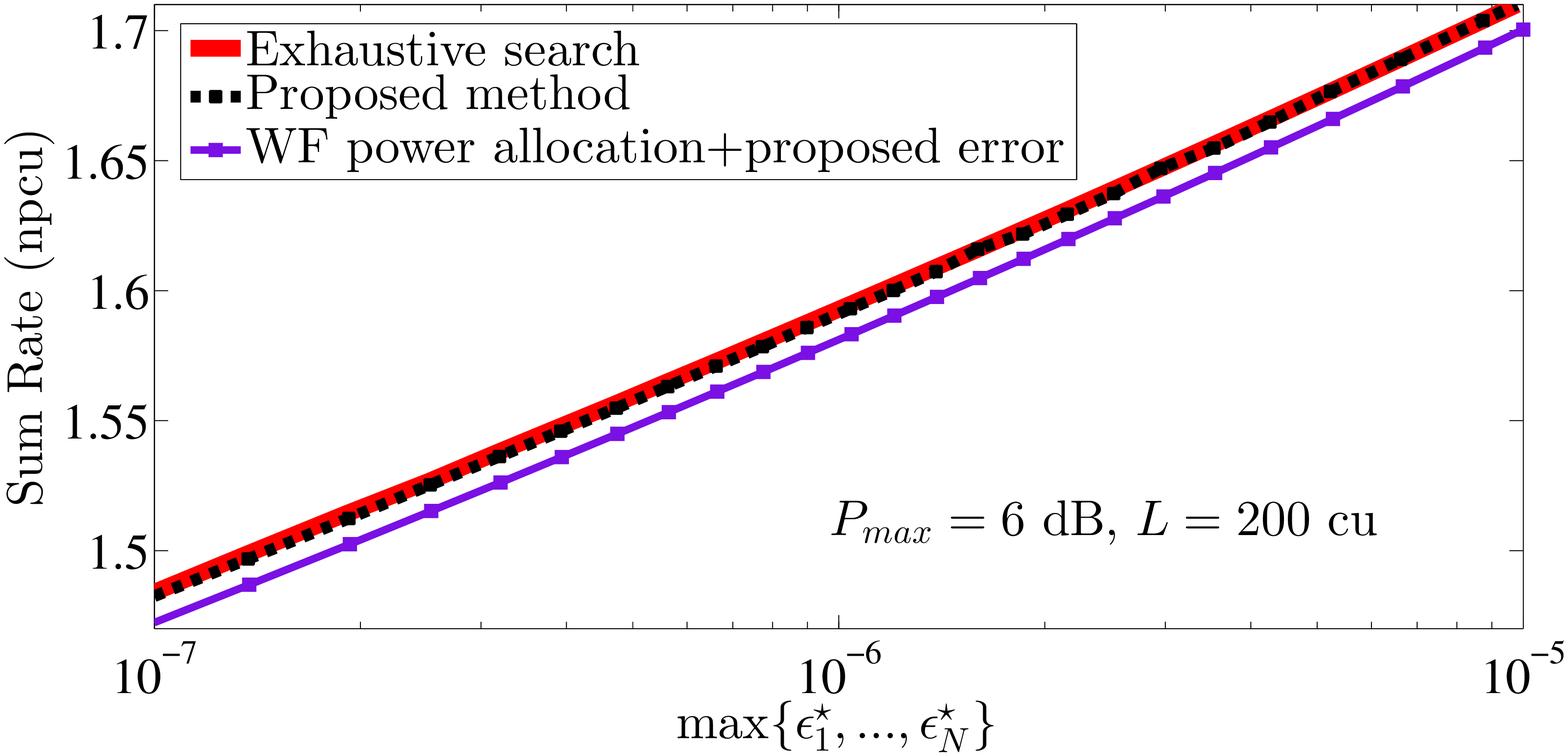}
\caption{Sum rate vs maximum error probability. }
\vspace{-5mm}
\end{figure}
$\bullet$  The scheme with the WF power allocation and the error probability assignment based on Theorem 1 achieves the tradeoff region close to the proposed method optimizing both the error probability and the power allocation (Fig. 1).\\
$\bullet$ For short codewords, the throughput is remarkably affected by the length of the codeword. However, the effect of increasing the codeword length  decreases for long codewords (Fig. 2b). Also, optimal error probability assignment with WF power allocation achieves higher throughput compared to optimal power allocation with equal error probability assignment. Moreover, the performance of WF with minmax error probability assignment is close to that of the scheme with the proposed power allocation with minmax error probability assignment (Figs. 2a and 2b). \\
$\bullet$   For short codeword (say,
$L\leq 1000$ channel uses), 
the proposed algorithm leads to considerable throughput improvement in comparison with other schemes. For instance, when $L=100$ cu, the performance of the proposed method has $66 \%$ of improvement. However, the performance difference of the schemes decreases in the cases with long codewords.\\
$\bullet$ Finally, as observed in Figs. 1, 2a, and 2b, the gap between the developed algorithm and the exhaustive search-based algorithm diminishes by increasing $P_{\text{max}}$ or $L$.  Thus, our proposed algorithm can be effectively applied to jointly optimize the sum rate and the error probability of multi-user networks in delay-constrained applications.
\begin{figure}
\centering
\includegraphics[width=5in, height=3in]{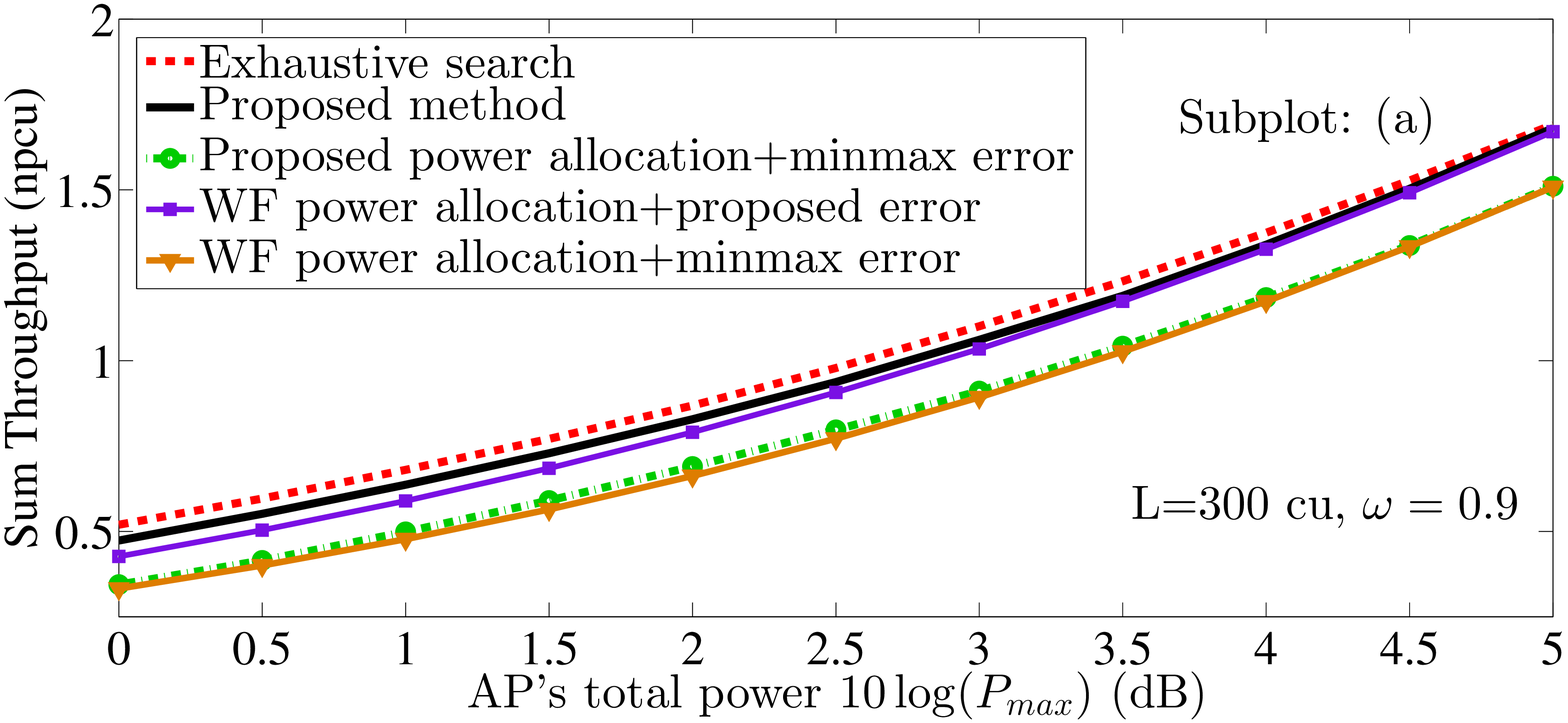}
\includegraphics[width=5.18in, height=3in]{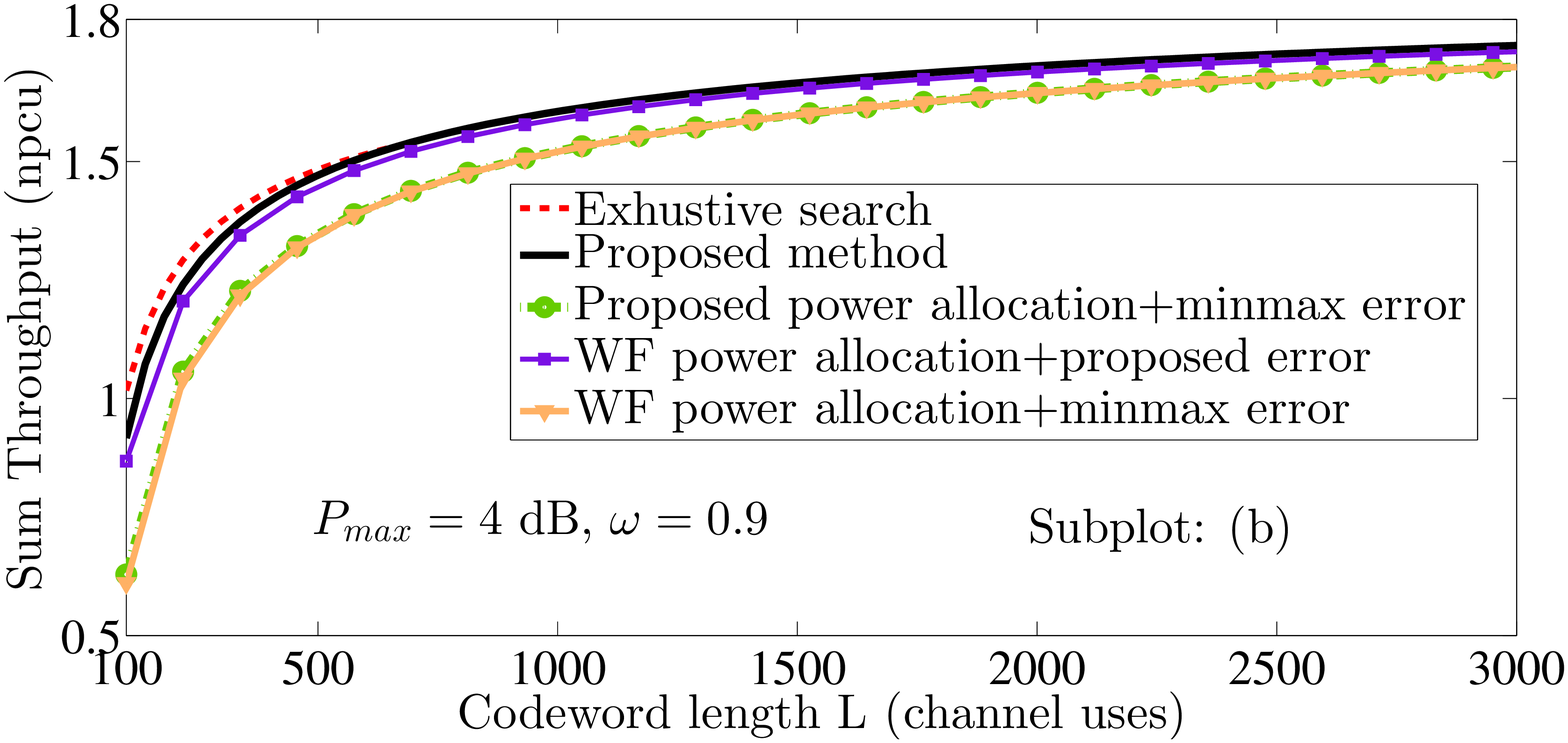}
\caption{Sum throughput of the considered algorithms. Subplot (a): Sum throughput vs the total power constraint of the AP. Subplot (b): Sum throughput vs the codeword length.}
\vspace{-5mm}
\end{figure}

\footnotesize
\vspace{-3mm}
\bibliographystyle{IEEEtran} 
\vspace{-1mm}
\bibliography{ref_letter}

\end{document}